\documentclass[preprint,12pt,3p]{elsarticle}

\RequirePackage[OT1]{fontenc}
\RequirePackage{amsthm,amsmath}
\RequirePackage{natbib}
\RequirePackage[colorlinks,citecolor=blue,urlcolor=blue]{hyperref}

\theoremstyle{plain}

\usepackage{float}
\usepackage[T1]{fontenc}
\usepackage{graphicx}
\usepackage{floatflt}
\usepackage{bm}
\usepackage{rotating}
\usepackage{multirow}
\usepackage{booktabs}
\usepackage{xspace}
\usepackage{color}
\usepackage{braket}
\usepackage{algpseudocode}
\usepackage{tabularx}
\usepackage{pdfcomment}
\usepackage{lineno}
\usepackage{amsmath}
\usepackage{amssymb}
\usepackage{amsthm}
\usepackage{bm}
\usepackage{rotating}
\usepackage{multirow}
\usepackage{booktabs}
\usepackage{placeins}
\usepackage{wrapfig}
\usepackage{graphicx}
\graphicspath{{./figures/}}
\usepackage{enumerate}
\usepackage{mathtools}
\usepackage{cleveref}
\usepackage{subcaption}
\usepackage{verbatim}

\usepackage{lipsum}
\usepackage{todonotes}
\usepackage{standalone}
\usepackage{hyperref}
\usepackage{url}

\usepackage{algorithm}
\usepackage{algpseudocode}

\newcommand{\indep}{\rotatebox[origin=c]{90}{$\models$}}

\DeclarePairedDelimiterX{\ExpArg}[1]{[}{]}{#1}

\theoremstyle{plain}
\newtheorem{theorem}{Theorem}[section]
\newtheorem{lemma}[theorem]{Lemma}

\newtheorem{proposition}[theorem]{Proposition}

\theoremstyle{definition}
\newtheorem{definition}{Definition}[section]

\theoremstyle{remark}

\definecolor{lightgray}{gray}{0.95}


\newcommand{\Trace }[1]{\mbox{}{\bf{Tr}}\left(#1\right)}

\newcommand{\NormS}[1]{\mbox{}\left\|#1\right\|^2}

\newcommand{\setlinespacing}[1]%
           {\setlength{\baselineskip}{#1 \defbaselineskip}}

\newcommand{\BlockDiagk}[1]{\mbox{}\left(%
\begin{array}{cc}
  \Sigma_{k} & \bf{0} \\
  \bf{0} &  \Sigma_{\rho-k}\\
\end{array}\right)}

\newcommand{\BlockDiagkk}[1]{\mbox{}\left(%
\begin{array}{cc}
  \Sigma_{k} & \bf{0} \\
  \bf{0} & \bf{0} \\
\end{array}\right)}

\newcommand{\BlockDiagkrk}[1]{\mbox{}\left(%
\begin{array}{cc}
  \bf{0} & \bf{0} \\
  \bf{0} & \Sigma_{\rho-k} \\
\end{array}\right)}

\newcommand{\BlockDiagkkh}[1]{\mbox{}\left(%
\begin{array}{c}
  \Sigma_{k} \\
  \bf{0} \\
\end{array}\right)}

\newcommand{\BlockDiagkrkh}[1]{\mbox{}\left(%
\begin{array}{c}
  \bf{0} \\
  \Sigma_{\rho-k} \\
\end{array}\right)}
\newenvironment{Proof}{\noindent {\em Proof:}}


\long\def\killtext#1{}



\errorcontextlines\maxdimen

\makeatletter
    \newcommand*{\algrule}[1][\algorithmicindent]{\makebox[#1][l]{\hspace*{.5em}\thealgruleextra\vrule height \thealgruleheight depth \thealgruledepth}}%
\newcommand*{\thealgruleextra}{}
\newcommand*{\thealgruleheight}{.75\baselineskip}
\newcommand*{\thealgruledepth}{.25\baselineskip}

\newcount\ALG@printindent@tempcnta
\def\ALG@printindent{%
    \ifnum \theALG@nested>0
        \ifx\ALG@text\ALG@x@notext
        \else
            \unskip
            \addvspace{-1pt}
            \ALG@printindent@tempcnta=1
            \loop
                \algrule[\csname ALG@ind@\the\ALG@printindent@tempcnta\endcsname]%
                \advance \ALG@printindent@tempcnta 1
            \ifnum \ALG@printindent@tempcnta<\numexpr\theALG@nested+1\relax
            \repeat
        \fi
    \fi
    }%
\usepackage{etoolbox}
\patchcmd{\ALG@doentity}{\noindent\hskip\ALG@tlm}{\ALG@printindent}{}{\errmessage{failed to patch}}
\makeatother

\newbox\statebox
\newcommand{\myState}[1]{%
    \setbox\statebox=\vbox{#1}%
    \edef\thealgruleheight{\dimexpr \the\ht\statebox+1pt\relax}%
    \edef\thealgruledepth{\dimexpr \the\dp\statebox+1pt\relax}%
    \ifdim\thealgruleheight<.75\baselineskip
        \def\thealgruleheight{\dimexpr .75\baselineskip+1pt\relax}%
    \fi
    \ifdim\thealgruledepth<.25\baselineskip
        \def\thealgruledepth{\dimexpr .25\baselineskip+1pt\relax}%
    \fi
    \State #1%
    \def\thealgruleheight{\dimexpr .75\baselineskip+1pt\relax}%
    \def\thealgruledepth{\dimexpr .25\baselineskip+1pt\relax}%
}

\usepackage{amssymb}





\journal{arXiv}

\begin{document}

\begin{frontmatter}

\title{Combinatorics of Distance Covariance:\\ Inclusion-Minimal Maximizers of Quasi-Concave Set Functions for Diverse Variable Selection}
\author[label1,label2]{Praneeth Vepakomma\corref{cor1}}
\address[label1]{Rutgers University}
\address[label2]{Motorola Solutions\fnref{label4}}

\cortext[cor1]{corresponding author}

\ead{praneeth@scarletmail.rutgers.edu}

\author[label5]{Yulia Kempner}
\address[label5]{Holon Institute of Technology}
\ead{yuliak@hit.ac.il}

\begin{abstract}
In this paper we show that the negative sample distance covariance function is a quasi-concave set function of samples of random variables that are not statistically independent. We use these properties to propose greedy algorithms to combinatorially optimize some diversity (low statistical dependence) promoting functions of distance covariance. Our greedy algorithm obtains all the inclusion-minimal maximizers of this diversity promoting objective. Inclusion-minimal maximizers are multiple solution sets of globally optimal maximizers that are not a proper subset of any other maximizing set in the solution set. We present results upon applying this approach to obtain diverse features (covariates/variables/predictors) in a feature selection setting for regression (or classification) problems. We also combine our diverse feature selection algorithm with a distance covariance based relevant feature selection algorithm of \cite{WahbaFS} to produce subsets of covariates that are both relevant yet ordered in non-increasing levels of diversity of these subsets.
\end{abstract}

\begin{keyword}
Distance covariance, quasi-concave set function \sep minimal-maximizers, regression, diverse feature selection, greedy algorithm, combinatorics.
\end{keyword}

\end{frontmatter}



\section{Introduction}
\subsection{The classical problem of variable selection:}
The problem of "variable selection" also known as "feature selection" or "covariate selection" is a prominent problem in statistics and machine learning. The goal in here is to be able to choose an optimal subset of covariates in a regression or classification setting that would perform optimally with respect to the out-of-sample prediction or classification accuracy when the chosen subset is used to predict (or classify) one or more real-valued response variables (in regression) or one or more categorical variables (in classification). There have been an umpteen number of techniques developed for this problem under a broadly varying spectrum of assumptions.
\subsection{The more recent problem of diverse variable selection:}
Traditional feature selection algorithms have the primary goal of finding the best feature subset that is relevant to a regression or classification task. More recently, there has been a strong focus on not just the above mentioned goal of relevant feature selection but also on selecting a "small" subset of "diverse" features. Diversity is
useful for several reasons such as interpretability, robustness to noise and in some cases to cater to reduction of real-life costs of costly feature acquisition for guiding feature engineering to decide on what other features could be acquired etc.
\subsubsection{Some existing work on diversification:}The authors in \cite{divfs} provide a solution in the specific case of linear regression through a formulation where a diversity promoting sub-modular regularizer is added to the standard linear regression problem. In this setting the solution is obtained by greedy algorithms that optimize a submodular function based objective. Although this is an interesting approach, we'd like to point that this approach restricts the regression model to be linear unlike it being generalized to any regression (non-linear and linear) models. Another important issue with this approach is that their solution is not globally optimal but is instead an approximation with a well-known (in submodular optimization) $1-\frac{1}{e}$ styled guarantee of $\frac{(1-e^{-b.\gamma(U,k)}).OPT}{c}$, where $OPT$ is the optimal solution, $\gamma(U,k)$ is a function of the solution subset of features $U$ obtained through their algorithm and it's cardinality $k$ (i.e., the number of features in $U$). $b, c$ are algorithm dependent constants depending on the specific choice of \textit{algorithm} out of  multiple algorithms that they propose. Note that $(1-\frac{1}{e})$ is approximately equal to $63\%$.\subsubsection{Existing work on diversification with mutual-information:} Another popular approach is \cite{mrmr} which is based on measuring diversity and relevancy through functions of mutual-information. Their solution approximately optimizes their proposed objective as obtaining a global solution would require $O(n^{|S|})$ search operations where $n$ is the number of samples and $|S|$ is the cardinality of the number of features required to be selected by the algorithm. This can be a prohibitively large number in the case of many practical datasets and required $|S|$.

\subsection{Advantages of our proposed algorithms:}
 The technique proposed in below sections of our paper has two major advantages:
 \begin{enumerate}
     \item Our solution to our proposed diversity encouraging objective function is globally optimal with no approximation error unlike the $1-\frac{1}{e}$ styled approximate solution provided by \cite{divfs} or the unquantified approximation error provided by \cite{mrmr}. We also propose an approach that is completely devoid of any parameters and provide a global solution to our proposed formulation. That said, we do completely recognize that the objective function proposed in our technique varies from the objective functions proposed in existing techniques.

 \item Another advantage of our approach is that it is independent of the choice of regression (linear/non-linear) or classification (linear/non-linear) model to be used unlike the work by \cite{divfs} which focusses only on linear regression.

 \item Our approach can directly be used for diversified feature selection in both cases of univariate or multivariate (vector-valued) responses (in regression) or multi-label (in classification) without modifying our proposed objective function or algorithm while the approach in \cite{divfs} does not seem to extend trivially beyond the univariate response case in linear regression without modifying their regularized objective function or algorithmic routines.
 \end{enumerate}

Prior to getting into the crux of our proposed theoretical results and algorithmic implications, we'd like to note that in theory our approach can be explicitly parametrized by a trade-off parameter to control the trade-off between relevancy and diversity of features selected. Such tuning of trade-offs is not the main focus of this paper. The previously proposed approach using spectral regularization \cite{divfs} does parametrize this through regularization parameters that weigh the submodular regularizer appropriately.
\section{Problem Formulation:}
  In this paper we cover the following three problems: 

 \begin{enumerate}[Problem I:]

 \item \textbf{Diverse Feature Selection}\\ The goal here is to find a subset of features that have the least statistical dependence amongst each other. This implies that the selected features would be diverse.\par

   \item \textbf{All-Relevant Feature Selection}\\The goal here is to find a subset of features that are most statistically dependent on a response variable.


     \item \textbf{Diverse and Relevant Feature Selection}\\The goal here is to find a subset of features that are more statistically dependent on a response variable while also being less statistically dependent amongst each other.
 \end{enumerate}
We present a greedy-algorithm with exactly optimal solutions in this paper for our formulated objective to solve Problem I. We point to an existing  approach for  Problem II and propose simple methodologies for Problem III where the methodologies are based on solutions of Problem I and II. Before we get to the main result of our paper, our suggested two simple methodological approaches for Problem III are:
\begin{enumerate}[(a.)]
\item Controlled approach: In this approach, we first choose a subset of features that "individually" have a statistical dependency \textit{i.e $\geq \alpha \in \mathcal{R}^{+}$}    with response variable and call this subset the controlled set. We then run our algorithm proposed for Problem I for choosing a diverse set of features from this controlled set.
\item Two-stage approach: In this approach, the Problem III could be approached by solving Problem II followed by Problem I or vice-versa.

\subsection{\textbf{Main Result of the paper:}}So to clearly reiterate, our main and most novel contribution of this paper is our proposed algorithm for Problem I.
\end{enumerate}
 \section{Preliminaries:}
In this section we introduce some preliminaries about distance correlation and distance covariance which we extensively use in our paper to build up towards our proposed theoretical results.
\subsection{\textbf{Distance Covariance and Distance Correlation:}}
Distance Correlation \cite{qcf0} is a measure of nonlinear statistical dependencies between random vectors of arbitrary dimensions. We describe below distance covariance $\mathbb{\nu}^2(\mathbf{x},\mathbf{y})$ between random variables $\mathbf{x} \in \mathbb{R}^d$ and $\mathbf{y} \in \mathbb{R}^m$ with finite first moments is a non-negative number as
	\begin{equation}\label{charac}
		\mathbb{\nu}^2(\mathbf{x},\mathbf{y})=\int_{\mathbb{R}^{d+m}}|f_{\mathbf{x},\mathbf{y}}(t,s)-f_\mathbf{x}(t)f_\mathbf{y}(s)|^2 w(t,s)dtds
	\end{equation}

where $w(t, s)$ is a weight function as defined in \cite{qcf0}, $f_\mathbf{x},f_\mathbf{y}$ are characteristic functions of $\mathbf{x},\mathbf{y}$ and $f_{\mathbf{x},\mathbf{y}}$ is the joint characteristic function.\par The distance covariance is zero if and only if random variables $\mathbf{x}$ and $\mathbf{y}$ are independent. Using the above definition of distance covariance, we have the following expression for Distance Correlation from \cite{qcf0}:\par
 The squared Distance Correlation between random variables $\mathbf{x} \in \mathbb{R}^d$ and $\mathbf{y} \in \mathbb{R}^m$ with finite first moments is a nonnegative number is defined as

\begin{equation} \rho^2(\mathbf{x},\mathbf{y})
	    =\left\{ \begin{array}{cc}
				 \frac{\mathbb{\nu}^2(\mathbf{x},\mathbf{y})}{\sqrt{\mathbb{\nu}^2(\mathbf{x},\mathbf{x})\mathbb{\nu}^2(\mathbf{y},\mathbf{y})}},
				 & \mathbb{\nu}^2(\mathbf{x},\mathbf{x})\mathbb{\nu}^2(\mathbf{y},\mathbf{y})>0.\\
	        0, & \mathbb{\nu}^2(\mathbf{x},\mathbf{x})\mathbb{\nu}^2(\mathbf{y},\mathbf{y})=0.
	        \end{array} \right.
	\end{equation}

The Distance Correlation defined above has the following interesting properties;

\begin{enumerate}
    \item ${\rho}^2(\mathbf{x},\mathbf{y})$	 is applicable for arbitrary dimensions $d$ and $m$ of $\mathbf{x}$ and $\mathbf{y}$ respectively.

    \item ${\rho}^2(\mathbf{x},\mathbf{y})=0$ if and only if $\mathbf{x}$ and $\mathbf{y}$ are independent.

    \item ${\rho}^2(\mathbf{x},\mathbf{y})$ satisfies the relation $0 \leq \rho^2(\mathbf{x},\mathbf{y}) \leq 1$.
\end{enumerate}
 \subsection{\textbf{Sample Distance Covariance and Sample Distance Correlation:}}
 We provide the definition of sample version of distance covariance \cite{qcf0} given samples $\{ (\mathbf{x}_k,\mathbf{y}_k) | k = 1,2,\ldots, n \}$ sampled i.i.d. from joint distribution of random vectors $\mathbf{x} \in \mathbb{R}^d$ and $\mathbf{y} \in \mathbb{R}^m$. To do so, we define two squared Euclidean distance matrices $\mathbf{E}_\mathbf{X}$ and $\mathbf{E}_\mathbf{Y}$,  where each entry $[\mathbf{E}_\mathbf{X}]_{k,l} = \NormS{\mathbf{x}_k-\mathbf{x}_l}$ and $[\mathbf{E}_\mathbf{Y}]_{k,l} = \NormS{\mathbf{y}_k-\mathbf{y}_l}$ with $k,l \in \{ 1,2,\ldots, n\}$. These squared distance matrices are then double-centered by making their row and column sums zero and are denoted as $\widehat{\mathbf{E}}_{\mathbf{X}}, \widehat{\mathbf{Q}}_{\mathbf{X}}$, respectively. So given a double-centering matrix $\mathbf{J}=\mathbf{I}-\frac{1}{n}\mathbf{1}\mathbf{1}^T$, we have $\widehat{\mathbf{E}}_\mathbf{X}=\mathbf{J}\mathbf{E}_\mathbf{X}\mathbf{J}$ and $\widehat{\mathbf{E}}_\mathbf{Y}=\mathbf{J}\mathbf{E}_\mathbf{Y}\mathbf{J}$. The sample distance covariance and sample distance correlation can now be defined as follows:

\begin{definition}{\textbf{Sample Distance Covariance \cite{qcf0}:}}
Given i.i.d samples $\mathcal{X} \times \mathcal{Y} = \{ (\mathbf{x}_k,\mathbf{y}_k) | k = 1,2,3,\ldots, n\}$ and corresponding double centered Euclidean distance matrices $\widehat{\mathbf{E}}_\mathbf{X}$ and $\widehat{\mathbf{E}}_\mathbf{Y}$, then the squared sample distance correlation is defined as,
\label{popDC}\[
    \hat{\mathbb{\nu}}^2(\mathbf{X},\mathbf{Y})=\frac{1}{n^2}\sum_{k,l=1}^{n}[\widehat{\mathbf{E}}_\mathbf{X}]_{k,l}[\widehat{\mathbf{E}}_\mathbf{Y}]_{k,l},
\]	
\end{definition}
Using this, sample distance correlation is given by
\label{sampleDC}
\[
	\hat{\rho}^2(\mathbf{X},\mathbf{Y})
	= \left\{ \begin{array}{cc}
    	 \frac{\mathbf{\hat{\nu}}^2(\mathbf{X},\mathbf{Y})}{\sqrt{\mathbf{\hat{\nu}}^2(\mathbf{X},\mathbf{X})\mathbf{\hat{\nu}}^2(\mathbf{Y},\mathbf{Y})}}, & \mathbf{\hat{\nu}}^2(\mathbf{X},\mathbf{X})\mathbf{\hat{\nu}}^2(\mathbf{Y},\mathbf{Y})>0. \\
    	0, & \mathbf{\hat{\nu}}^2(\mathbf{X},\mathbf{X})\mathbf{\hat{\nu}}^2(\mathbf{Y},\mathbf{Y})=0.
	\end{array}
	\right.
\]

\section{\textbf{Kosorok's Distance Covariance Independence Inequality:}}
If $\mathbf{X,Z} \in \mathbb{R}^p$ and $\mathbf{Y} \in \mathbb{R}^q$ and if and only if $\mathbf{Z} \indep (\mathbf{X},\mathbf{Y})$ then \begin{equation}\nu^2(\mathbf{X}+ \mathbf{Z},\mathbf{Y}) \leq \nu^2(\mathbf{X},\mathbf{Y})\end{equation}
Note that $\indep$ indicates 'statistically independent' in statistical literature. This implies that for each $\mathbf{X,Y,Z}$ that are not pairwise statistically independent (i.e distance covariance between components of any subset of cardinality 2 of $\mathbf{X,Y,Z}$  is positive) then
\begin{equation}\nu^2(\mathbf{X}+ \mathbf{Z},\mathbf{Y}) > \nu^2(\mathbf{X},\mathbf{Y})\end{equation}

 \section{Proof of Kosorok's Distance Covariance Inequality}

The Kosorok's Distance Covariance Independence Inequality was proved in \citep{WahbaFS, DCIneq} and is based on the property of characteristic functions (denoted below by $f$) that

\begin{equation}\label{dcieqns}
    \vert{f_{\mathbf{X + Z},\mathbf{Y}}(t,s) - f_{\mathbf{X + Z}}}(t) f_{\mathbf{Y}}(s)\vert^2 \leq \vert{f_{\mathbf{Z}}}(t) \vert^2 \vert{f_{\mathbf{X },\mathbf{Y}}(t,s) - f_{\mathbf{X }}}(t) f_{\mathbf{Y}}(s)\vert^2
\end{equation} and $\vert{f_{\mathbf{Z}}}(t) \vert^2 \leq 1$. The equation (\ref{dcieqns}) above can be obtained by these facts
\begin{align}
\mid f_{\mathbf{X}+ \mathbf{Z},\mathbf{Y}}(t, s) - f_{\mathbf{X}+\mathbf{Z}}(t)f_{\mathbf{Y}} (s)\vert ^ 2 &= \vert \mathop{\mathbb{E}}e^{it^T(\mathbf{X}+\mathbf{Z})+is^{T} \mathbf{Y}} - \mathop{\mathbb{E}}e^{it^T(\mathbf{X}+\mathbf{Z})}\mathop{\mathbb{E}}e^{is^T\mathbf{Y}} \vert^2\\ \nonumber
      &=\vert \mathop{\mathbb{E}}e^{it^{T}\mathbf{X}+is^{T} \mathbf{Y}} \mathop{\mathbb{E}}e^{it^{T}\mathbf{Z}}-\mathop{\mathbb{E}}e^{it^{T}\mathbf{X}}\mathop{\mathbb{E}}e^{it^{T}\mathbf{Z}}\mathop{\mathbb{E}}e^{it^{T}\mathbf{Y}}\vert^2\\ \nonumber
      &=\vert f_{\mathbf{X},\mathbf{Y}}(t, s)f_{\mathbf{Z}}(t)-f_{\mathbf{X}}(t)f_{\mathbf{Z}}(t)f_{\mathbf{Y}}(s)\vert^2\\ \nonumber
      &= \vert{f_{\mathbf{\mathbf{Z}}}}(t) \vert^2 \vert{f_{\mathbf{X },\mathbf{Y}}(t,s) - f_{\mathbf{X }}}(t) f_{\mathbf{Y}}(s)\vert^2
      \end{align}
      which with implication from $\vert{f_{\mathbf{Z}}}(t) \vert^2 \leq 1$ gives \begin{equation}
          \nu^2({\mathbf{X}+ \mathbf{Z}, \mathbf{Y}}) \leq \nu^2(\mathbf{X},\mathbf{Y})
      \end{equation}
      We know that if $\mathop{\mathbb{E}}\vert \mathbf{X} \vert_d < \infty $, $\mathop{\mathbb{E}}\vert \mathbf{X} + \mathbf{Z}\vert_m < \infty$ and $\mathop{\mathbb{E}}\vert \mathbf{Y} \vert _d < \infty$, then from \cite{qcf0}
$$\lim_{n\to\infty}\nu^2_{n}(\mathbf{X} + \mathbf{Z}, \mathbf{Y} ) = \nu^2(\mathbf{X} + \mathbf{Z}, \mathbf{Y} )$$
and $$\lim_{n\to\infty}\nu^2_{n}(\mathbf{X}, \mathbf{Y} ) = \nu^2(\mathbf{X},\mathbf{Y})$$
Thus, for the sample distance covariance, if $n$ is large enough, we should have $$V^2_{n}(\mathbf{X}+\mathbf{Z}, \mathbf{Y} ) \leq V^2(\mathbf{X},\mathbf{Y})$$ only under the assumption of independence between $(\mathbf{X}, \mathbf{Y} )$ and $\mathbf{Z}$. Note that $\nu_n$ indicates sample distance covariance and $\nu$ indicates population distance covariance.\\
 \par

 \underline{Note}: In the case where considering $(\mathbf{X} \cup \mathbf{Z})$ is of interest, we could use the above theorem by incorporating degenerated random vectors as follows: Suppose $\mathbf{X} \in \mathbb{R}^{p1}$ and $\mathbf{Z} \in \mathbb{R}^{p2}$, then we augment $\mathbf{X}$ and
$\mathbf{Z}$ to be $\tilde{\mathbf{X}} = (\mathbf{X}, \mathbf{0}_{p2})$ and $\tilde{\mathbf{Z}} = (\mathbf{0}_{p1}, \mathbf{Z})$ respectively. $\tilde{\mathbf{X}}$ and $\tilde{\mathbf{Z}}$ are therefore of the same dimension and $\tilde{\mathbf{X}} + \tilde{\mathbf{Z}} = (\mathbf{X}, \mathbf{Z})$. Therefore the $\mathbf{X} \cup \mathbf{Z}$ operation in the context of computing $\hat{\nu}(\mathbf{X}\cup\mathbf{Z},\mathbf{Y})$ with matrices $\mathbf{X},\mathbf{Z},\mathbf{Y}$ is equivalent to appending the columns of $\mathbf{X}$ with the columns of $\mathbf{Z}$ followed by computing the sample-distance covariance between the resulting matrix and $\mathbf{Y}$.  \section{Quasi-Concave Set Functions}

\subsection{Notation and definitions:} We now describe some notation and introduce some definitions that we use through out the paper in the sections below. We use bold faced $\mathbf{X}$ to denote the complete ground set of features/covariates and indexed $X_i$ to denote the $i$'th covariate.  That is we use $i$ indexed subsets like $S_i$ to indicate a singleton (unit cardinality) element of $\mathbf{S}$ labeled by $i$. We denote the response variable in a regression setting with $\mathbf{Y}$.  We denote the set $2^\mathbf{X} \setminus \left\{\phi,\mathbf{X}\right\}$ by $\mathcal{P}^{-X}$ and use  $\setminus$ to denote set difference, i.e $\mathbf{X \setminus Z} = \left\{x:x \in \mathbf{X} \text{ and } x\not \in \mathbf{Z}\right\}$. \par Given a set system $(\mathbf{X},\mathcal{F})$ which is a collection $\mathcal{F}$ of subsets of a ground set $\mathbf{X}$ where $\mathcal{F} \subseteq 2^\mathbf{X}$, we define a quasi-concave set function as given below.

\begin{definition}[\textbf{Quasi-Concave Set Function \cite{Main1},\cite{Mullat}:}]\label{qcvxDef}
A function $F : \mathcal{F} \mapsto \mathbb{R}$ defined on a set system $(\mathbf{X}, \mathcal{F})$ is quasi-concave
if for each $\mathbf{S, T}\in \mathcal{F}$, \begin{equation}\label{mon_func}
F(\mathbf{S} \cap \mathbf{T} ) \geq \min{\{F(\mathbf{S}), F(\mathbf{T})\}}\end{equation}
\end{definition}

\begin{definition}[\textbf{Monotone Linkage Function  \cite{Mullat}:}]

A function $\pi(X_i,\mathbf{Z})$ defined on $\mathbf{Z} \in \mathcal{P}^{-X}  , X_i \in \mathbf{X} \setminus \mathbf{Z}$ is called a monotone linkage function if \begin{equation}\pi(X_i, \mathbf{S}) \geq \pi(X_i, \mathbf{T}), \mathbf{S} \subseteq \mathbf{T}\in \mathcal{F},  \forall X_i \in \mathbf{X} \setminus T\end{equation}
\end{definition}

We'd like to note for the clarity of the reader that $X_i$ is an element while $\mathbf{S},\mathbf{T}$ are sets. Therefore, to make this distinction clear we denote sets in bold-faced font and elements otherwise.

\section{Some Combinatorial Properties of Negative Distance Covariance} We now prove some quasi-concave as well as monotone linkage set function properties of some functions of negative distance covariance.
\begin{theorem}[Quasi-Concave Distance Covariance Set Function Theorem]\label{thm:mvt}


 If we have $\mathbf{S}\cap \mathbf{T} \neq \varnothing \text{ and } \forall \mathbf{S}, \mathbf{T}, \mathbf{Y} \text{ if } \nu^2(\mathbf{S}, \mathbf{T}) > 0 \land \nu^2(\mathbf{S}, \mathbf{Y}) > 0 \land \nu^2(\mathbf{T}, \mathbf{Y}) > 0  \text{ then we have }$ \begin{equation}-\nu^2(\mathbf{S} \cap \mathbf{T}, \mathbf{Y}) \geq min(-\nu^2(\mathbf{S},\mathbf{Y}),-\nu^2(\mathbf{T},\mathbf{Y}))\end{equation}
\end{theorem}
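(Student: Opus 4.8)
The plan is to rewrite the claim in its equivalent positive form: multiplying through by $-1$, it asks us to show
\begin{equation}
\nu^2(\mathbf{S}\cap\mathbf{T},\mathbf{Y})\leq\max\!\bigl(\nu^2(\mathbf{S},\mathbf{Y}),\,\nu^2(\mathbf{T},\mathbf{Y})\bigr),
\end{equation}
and I in fact expect to establish the strictly stronger bound $\nu^2(\mathbf{S}\cap\mathbf{T},\mathbf{Y})<\min(\nu^2(\mathbf{S},\mathbf{Y}),\nu^2(\mathbf{T},\mathbf{Y}))$ whenever $\mathbf{S}$ and $\mathbf{T}$ are incomparable. First I would dispose of the degenerate cases: if $\mathbf{S}\subseteq\mathbf{T}$ then $\mathbf{S}\cap\mathbf{T}=\mathbf{S}$ and the desired inequality holds with equality on the $\mathbf{S}$-side, and symmetrically if $\mathbf{T}\subseteq\mathbf{S}$; the remaining case is that $\mathbf{S}\setminus\mathbf{T}\neq\varnothing$, $\mathbf{T}\setminus\mathbf{S}\neq\varnothing$, and $\mathbf{S}\cap\mathbf{T}\neq\varnothing$ by hypothesis.

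For the main case I would use the column-augmentation device recorded in the Note following the proof of Kosorok's inequality. Setting $\mathbf{A}=\mathbf{S}\cap\mathbf{T}$ and $\mathbf{B}=\mathbf{S}\setminus\mathbf{T}$ we have $\mathbf{S}=\mathbf{A}\cup\mathbf{B}$; after zero-padding $\mathbf{A}$ and $\mathbf{B}$ to a common dimension this reads $\tilde{\mathbf{A}}+\tilde{\mathbf{B}}$, with $\tilde{\mathbf{B}}$ playing the role of the added vector $\mathbf{Z}$ in Kosorok's inequality. Invoking the strict form of that inequality --- which applies precisely because, in the dependence regime of this paper, $\mathbf{A}$, $\mathbf{B}$ and $\mathbf{Y}$ are not pairwise independent, so the ``if and only if'' forces a strict increase rather than an equality --- gives
\begin{equation}
\nu^2(\mathbf{S},\mathbf{Y})=\nu^2(\tilde{\mathbf{A}}+\tilde{\mathbf{B}},\mathbf{Y})>\nu^2(\mathbf{A},\mathbf{Y})=\nu^2(\mathbf{S}\cap\mathbf{T},\mathbf{Y}).
\end{equation}
Repeating the argument verbatim with $\mathbf{B}$ replaced by $\mathbf{T}\setminus\mathbf{S}$ yields $\nu^2(\mathbf{T},\mathbf{Y})>\nu^2(\mathbf{S}\cap\mathbf{T},\mathbf{Y})$. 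Hence $\nu^2(\mathbf{S}\cap\mathbf{T},\mathbf{Y})$ lies strictly below both $\nu^2(\mathbf{S},\mathbf{Y})$ and $\nu^2(\mathbf{T},\mathbf{Y})$, so it lies below their maximum; negating both sides recovers the quasi-concavity inequality (indeed with strict inequality in the incomparable case).

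The step I expect to be the main obstacle is the bookkeeping between the hypotheses and the tool: the theorem is stated using positivity of $\nu^2(\mathbf{S},\mathbf{T})$, $\nu^2(\mathbf{S},\mathbf{Y})$ and $\nu^2(\mathbf{T},\mathbf{Y})$, but the strict Kosorok step genuinely needs the \emph{pieces} $\mathbf{S}\cap\mathbf{T}$, $\mathbf{S}\setminus\mathbf{T}$ (respectively $\mathbf{T}\setminus\mathbf{S}$) and $\mathbf{Y}$ to be pairwise dependent, and $\nu^2(\mathbf{S},\mathbf{T})>0$ does not by itself entail $\nu^2(\mathbf{S}\cap\mathbf{T},\mathbf{S}\setminus\mathbf{T})>0$. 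I would therefore make this explicit, either by strengthening the statement to assume that the relevant sub-blocks are pairwise dependent --- which is the standing assumption of the paper, per the abstract's reference to variables ``that are not statistically independent'' --- or by noting that for the feature matrices arising in practice all sub-block distance covariances are positive. Once that point is pinned down, no further computation is required beyond the augmentation identity $\tilde{\mathbf{A}}+\tilde{\mathbf{B}}=\mathbf{A}\cup\mathbf{B}$ and one application apiece of the already-proved inequality.
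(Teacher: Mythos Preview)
Your proposal is correct and follows essentially the same case split and argument as the paper: dispose of $\mathbf{S}\subseteq\mathbf{T}$ and $\mathbf{T}\subseteq\mathbf{S}$ trivially, then in the incomparable case apply the strict form of Kosorok's inequality (via the zero-padding augmentation) to each of $\mathbf{S}\cap\mathbf{T}\subset\mathbf{S}$ and $\mathbf{S}\cap\mathbf{T}\subset\mathbf{T}$. Your closing caveat about the hypotheses is well taken and in fact more careful than the paper, which simply asserts that ``the Kosorok's distance covariance inequality implies'' the strict bounds without verifying that the stated positivity conditions on $\nu^2(\mathbf{S},\mathbf{T})$, $\nu^2(\mathbf{S},\mathbf{Y})$, $\nu^2(\mathbf{T},\mathbf{Y})$ suffice for the sub-blocks.
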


\begin{proof}


If $\mathbf{S} \cap \mathbf{T} = \mathbf{S}$ then since $\mathbf{S} \subseteq \mathbf{T}$


the Kosorok's distance covariance inequality implies \begin{equation}-\nu^2(\mathbf{S},\mathbf{Y}) \geq -\nu^2(\mathbf{T},\mathbf{Y})\end{equation}  Therefore we have
\begin{equation}-\nu^2(\mathbf{S} \cap \mathbf{T}, \mathbf{Y}) \geq min(-\nu^2(\mathbf{S},\mathbf{Y}),-\nu^2(\mathbf{T},\mathbf{Y}))\nonumber\end{equation}
Similarly, if $\mathbf{S} \cap \mathbf{T} = \mathbf{T}$, then since $\mathbf{T} \subseteq \mathbf{S}$
 \begin{equation}-\nu^2(\mathbf{T},\mathbf{Y}) \geq -\nu^2(\mathbf{S},\mathbf{Y})\end{equation}and therefore
\begin{equation}-\nu^2(\mathbf{S} \cap \mathbf{T}, \mathbf{Y}) \geq min(-\nu^2(\mathbf{S},\mathbf{Y}),-\nu^2(\mathbf{T},\mathbf{Y}))\end{equation}

 In the cases of ${\mathbf{S} \cap \mathbf{T}} \subset {\mathbf{S}}$ and  ${\mathbf{S} \cap \mathbf{T}} \subset {\mathbf{T}}$
the Kosorok's distance covariance inequality implies
\begin{equation}-\nu^2(\mathbf{S} \cap \mathbf{T}, \mathbf{Y}) > -\nu^2(\mathbf{S},\mathbf{Y})\end{equation} and
\begin{equation}-\nu^2(\mathbf{S} \cap \mathbf{T}, \mathbf{Y}) > -\nu^2(\mathbf{T},\mathbf{Y})\end{equation}
So \begin{equation}-\nu^2(\mathbf{S} \cap \mathbf{T}, \mathbf{Y}) \geq min(-\nu^2(\mathbf{S},\mathbf{Y}),-\nu^2(\mathbf{T},\mathbf{Y}))\end{equation}

\end{proof}

\subsection{\textbf{A monotone linkage function of distance covariance:}}
\begin{lemma}
The function $\pi(X_i,\mathbf{S})$ of distance covariance defined on $X_i \notin \mathbf{S}$  as \begin{equation}\label{piEqn}
    \underset{X_i \notin \mathbf{S}}{\pi(X_i,\mathbf{S})} = \sum_{\mathbf{S}_j \in \mathbf{S}} -\nu^{2}(X_i,\mathbf{S}_j)
\end{equation}
 is a monotone linkage function \end{lemma}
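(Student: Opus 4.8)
The plan is to unwind the definition of a monotone linkage function and reduce the claim to the non‑negativity of squared distance covariance, which was recorded in the preliminaries. Concretely, fix $\mathbf{S} \subseteq \mathbf{T} \in \mathcal{F}$ and $X_i \in \mathbf{X} \setminus \mathbf{T}$; I must show $\pi(X_i,\mathbf{S}) \geq \pi(X_i,\mathbf{T})$. First I would observe that $X_i \notin \mathbf{T}$ together with $\mathbf{S}\subseteq\mathbf{T}$ guarantees $X_i \notin \mathbf{S}$, so both $\pi(X_i,\mathbf{S})$ and $\pi(X_i,\mathbf{T})$ are defined by \eqref{piEqn}, and moreover every singleton $\mathbf{S}_j$ appearing in either sum is distinct from $X_i$, so each $\nu^2(X_i,\mathbf{S}_j)$ is a legitimate (finite) squared distance covariance.

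The key step is the additive decomposition of the sum over $\mathbf{T}$ using $\mathbf{S} \subseteq \mathbf{T}$:
\begin{equation}
\pi(X_i,\mathbf{T}) = \sum_{\mathbf{S}_j \in \mathbf{T}} -\nu^2(X_i,\mathbf{S}_j) = \sum_{\mathbf{S}_j \in \mathbf{S}} -\nu^2(X_i,\mathbf{S}_j) \;+\; \sum_{\mathbf{S}_j \in \mathbf{T}\setminus \mathbf{S}} -\nu^2(X_i,\mathbf{S}_j) = \pi(X_i,\mathbf{S}) - \sum_{\mathbf{S}_j \in \mathbf{T}\setminus \mathbf{S}} \nu^2(X_i,\mathbf{S}_j).
\end{equation}
Since $\nu^2(X_i,\mathbf{S}_j)\geq 0$ for every $j$ (squared distance covariance is non‑negative, as stated when distance covariance was introduced), the subtracted sum is $\geq 0$, hence $\pi(X_i,\mathbf{T}) \leq \pi(X_i,\mathbf{S})$, which is exactly the monotone linkage inequality $\pi(X_i,\mathbf{S}) \geq \pi(X_i,\mathbf{T})$ for $\mathbf{S}\subseteq\mathbf{T}$.

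There is essentially no hard analytic obstacle here; the argument is purely combinatorial once non‑negativity of $\nu^2$ is invoked. The only points that require a little care are bookkeeping ones: checking that the domains line up ($\mathbf{S},\mathbf{T}\in\mathcal{F}$ with $X_i$ outside the larger set, so that the definition applies to both), and noting that the decomposition is valid because $\mathbf{S}\subseteq\mathbf{T}$ means the index set of the $\mathbf{S}$‑sum is literally a subset of that of the $\mathbf{T}$‑sum with no repeated or cancelling terms. I would also remark that the same proof works verbatim with the sample version $\hat{\nu}^2$ in place of $\nu^2$, since sample squared distance covariance is likewise non‑negative, so the linkage property holds at the empirical level used by the algorithm.
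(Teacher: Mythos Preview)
Your proof is correct and follows essentially the same approach as the paper: split the sum over $\mathbf{T}$ into the sum over $\mathbf{S}$ and the sum over $\mathbf{T}\setminus\mathbf{S}$, then invoke non-negativity of $\nu^{2}$ to conclude $\pi(X_i,\mathbf{T})\leq\pi(X_i,\mathbf{S})$. Your version is slightly more careful about the domain bookkeeping and the remark on the sample version, but the argument is the same one-line decomposition.
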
\begin{Proof} For $\mathbf{S} \subseteq \mathbf{T}$ we have \begin{equation}\underset{X_i\notin \mathbf{T}}{\pi(X_i,\mathbf{T})}=\sum_{\mathbf{S}_j \in \mathbf{S}} -\nu_{i}^{2}(X_i,\mathbf{S}_j) -\sum_{\mathbf{T}_j \in \mathbf{T \setminus S}} \nu_{i}^{2}(X_i,\mathbf{T}_j) \leq
 \underset{X_i\notin \mathbf{T}  }
 {\pi(X_i,\mathbf{S})}=\sum_{\mathbf{S}_j \in \mathbf{S}} -\nu_{i}^{2}(X_i,\mathbf{S}_j)\end{equation}We would also like to note that as $\nu(\cdot)$ is a non-negative function the above inequality does hold true.
 \end{Proof}

 \begin{theorem}\cite{Main1}\begin{enumerate}[i)]

The function $M_{\pi}(\mathbf{T}) = \underset{X_i \in \mathbf{X}\setminus \mathbf{T}}{\text{min}}
  \pi(X_i,\mathbf{T})$ is a quasi-concave set function.
\end{enumerate}
\end{theorem}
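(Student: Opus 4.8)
The plan is to exploit the two defining inequalities directly, via a short witness-chasing argument. Fix $\mathbf{S}, \mathbf{T} \in \mathcal{F}$ with $\mathbf{S} \cap \mathbf{T} \in \mathcal{F}$; the goal is to establish $M_{\pi}(\mathbf{S} \cap \mathbf{T}) \geq \min\{M_{\pi}(\mathbf{S}), M_{\pi}(\mathbf{T})\}$. First I would let $X_i \in \mathbf{X} \setminus (\mathbf{S} \cap \mathbf{T})$ be an element attaining the minimum that defines $M_{\pi}(\mathbf{S} \cap \mathbf{T})$, so that $M_{\pi}(\mathbf{S} \cap \mathbf{T}) = \pi(X_i, \mathbf{S} \cap \mathbf{T})$.

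The key observation is that $X_i \notin \mathbf{S} \cap \mathbf{T}$ forces $X_i \notin \mathbf{S}$ or $X_i \notin \mathbf{T}$; without loss of generality assume $X_i \notin \mathbf{S}$, i.e.\ $X_i \in \mathbf{X} \setminus \mathbf{S}$. Since $\mathbf{S} \cap \mathbf{T} \subseteq \mathbf{S}$ and $X_i$ lies outside the larger set $\mathbf{S}$, the monotone linkage property of $\pi$ yields $\pi(X_i, \mathbf{S} \cap \mathbf{T}) \geq \pi(X_i, \mathbf{S})$. On the other hand $\pi(X_i, \mathbf{S}) \geq \min_{X_j \in \mathbf{X} \setminus \mathbf{S}} \pi(X_j, \mathbf{S}) = M_{\pi}(\mathbf{S})$, simply because $X_i$ is one of the candidates in that minimization. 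Chaining the two inequalities gives
\[ M_{\pi}(\mathbf{S} \cap \mathbf{T}) = \pi(X_i, \mathbf{S} \cap \mathbf{T}) \geq \pi(X_i, \mathbf{S}) \geq M_{\pi}(\mathbf{S}) \geq \min\{M_{\pi}(\mathbf{S}), M_{\pi}(\mathbf{T})\}, \]
which is exactly the quasi-concavity inequality; the symmetric case $X_i \notin \mathbf{T}$ produces the same bound with $M_{\pi}(\mathbf{T})$ in place of $M_{\pi}(\mathbf{S})$.

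I do not expect a serious obstacle here — the argument is essentially a one-line application of the two hypotheses — but there are two bookkeeping points I would be careful about. One is the set-system domain: I need $\mathbf{S} \cap \mathbf{T}$ to belong to $\mathcal{F}$ (and to $\mathcal{P}^{-X}$, so that $\mathbf{X} \setminus (\mathbf{S} \cap \mathbf{T})$ is nonempty and the minimum defining $M_{\pi}(\mathbf{S} \cap \mathbf{T})$ is genuinely attained) and that $\pi(X_i, \cdot)$ is defined at both $\mathbf{S} \cap \mathbf{T}$ and $\mathbf{S}$; this is where the restriction to $\mathcal{P}^{-X}$ and to set systems closed under the relevant intersections enters. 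The other is finiteness and attainment of the minima, which holds because $\mathbf{X}$ is finite. Once those are in place the displayed chain closes the proof, and in particular, combined with the earlier lemma that $\pi(X_i,\mathbf{S}) = \sum_{\mathbf{S}_j \in \mathbf{S}} -\nu^{2}(X_i, \mathbf{S}_j)$ is a monotone linkage function, it shows that the corresponding $M_{\pi}$ built from negative distance covariance is a quasi-concave set function.
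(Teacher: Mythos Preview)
Your argument is correct and is exactly the standard witness-chasing proof: pick a minimizer $X_i$ for $M_{\pi}(\mathbf{S}\cap\mathbf{T})$, use $X_i\notin\mathbf{S}\cap\mathbf{T}$ to land outside one of $\mathbf{S},\mathbf{T}$, apply the monotone linkage inequality along the inclusion $\mathbf{S}\cap\mathbf{T}\subseteq\mathbf{S}$ (or $\mathbf{T}$), and finish by bounding below by the corresponding $M_{\pi}$. The paper does not spell out a proof at all; it simply defers to Assertion~1 of \cite{Main1}, whose proof is precisely the argument you wrote. Your caveats about $\mathbf{S}\cap\mathbf{T}\in\mathcal{F}$ and attainment of the minimum are the right side conditions to flag, and they are automatically satisfied in the paper's setting $\mathcal{F}=\mathcal{P}^{-\mathbf{X}}$ with finite $\mathbf{X}$.
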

\begin{Proof}
The proof is in the proof of Assertion 1 in \cite{Main1}
\end{Proof}

\section{\textbf{Diverse Feature Selection:}}We aim to find all the subsets that maximize the function $M_{\pi}(\mathbf{T})$ which result in the solutions which for diverse features.

\begin{equation}\label{minFeqn1}
   \underset{\mathbf{T} \subset \mathbf{X}} {\mathrm{arg\enskip max }} \enskip M_{\pi}(\mathbf{T})
\end{equation}

The above equation (\ref{minFeqn1}) can be written as

\begin{equation}\label{minFeqn3}
     \underset{\mathbf{T} \subset \mathbf{X}} {\mathrm{arg\enskip max }} \enskip \underset{X_i \in \mathbf{X}\setminus \mathbf{T}}{\text{min}}
  \pi(X_i,\mathbf{T})
\end{equation}

This problem does not necessarily have a single, unique solution and hence we aim to find all the subsets that are maximizers of (\ref{minFeqn3}). These are essentially subsets that are each maximally separated from their corresponding nearest neighbor where the notion of nearness to their neighbor is given by (\ref{piEqn}).
 \begin{definition}[\textbf{$ \pi$-series:}]
 We refer to a series $s_{\pi}=(X_{i_1},\ldots, X_{i_N})$ as a $\pi$-series if \begin{equation}\pi({X}_{i_{k+1}}, \bf{\overline{S}_k}) = \underset{X_i \in \mathbf{X}\setminus \mathbf{\overline{S}_k}}{\text{min}}
  \pi(\textnormal{X}_i,\mathbf{\overline{S}_k}) \end{equation}
  
  for any starting set $\bf{\overline{S}_k} = \{X_{i_1},\ldots,X_{i_k}\}, k = {1,\ldots,N-1}$.
 \end{definition}
 Therefore it is a way of greedily populating a series that can start with any first element $\bf{X}_{i_1}$ being the current series, but the subsequent element to be added to the series, must be the element that minimizes the element to current series function of $\pi(\bf{X}_{i_{k+1}},\bf{\overline{S}_k})$ where $\bf{X}_{i_{k+1}}$ is the next element added and $\bf{\overline{S}_k}$ is the current series.
 \begin{definition}[\textbf{$\pi$-cluster}]
A subset $\bf{S}\in \mathcal{P}^{-\mathbf{X}}$ will be
referred to as a $\pi$-cluster if there exists a $\pi$-series, $s_\pi = (X_{i_1},\ldots,X_{i_N})$, such that $\bf{S}$ is a maximizer of $M_{\pi}(\bf{\overline{S}_k})$ over all starting sets $\bf{\overline{S}_k}$ of $s_\pi$.
 \end{definition}

\begin{theorem}\label{Theorem7.1}\cite{Main1}
 If for a $\pi$-series $s_{\pi} = (X_{i_1},X_{i_2},\ldots,i_N)$, a subset $\mathbf{S}\subset \mathbf{X}$ contains $X_{i_1}$, and if $X_{i_{k+1}}$ is the first element in $s_{\pi}$ not contained in $\mathbf{S}$ (for some $k \in  \{1,\ldots, N - 1\}$, then

\begin{equation}M_{\pi}(\mathbf{\overline{S}_k}) \geq  M_{\pi}(\mathbf{S})\end{equation}

where $\mathbf{\overline{S}_k} = \left(X_{i_1},\ldots, X_{i_k}\right )$. In particular, if $\mathbf{S}$ is an inclusion-minimal maximizer of $M_{\pi}$ (with regard
to $\mathcal{P}^{-\mathbf{X}})$, then $\mathbf{S} = \mathbf{\overline{S}_k}$, that is, $\mathbf{S}$ is a $\pi$-cluster.
\end{theorem}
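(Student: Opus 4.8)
The plan is to derive the inequality $M_{\pi}(\overline{\mathbf{S}}_k)\ge M_{\pi}(\mathbf{S})$ as a short chain built from exactly two facts already in hand: the \emph{greedy defining relation} of a $\pi$-series and the \emph{monotone linkage} property of $\pi$. First I would fix notation: write $\overline{\mathbf{S}}_k=\{X_{i_1},\dots,X_{i_k}\}$, and observe that because $X_{i_{k+1}}$ is by hypothesis the \emph{first} element of $s_{\pi}$ that is not in $\mathbf{S}$, all of $X_{i_1},\dots,X_{i_k}$ belong to $\mathbf{S}$; hence $\overline{\mathbf{S}}_k\subseteq\mathbf{S}$, while $X_{i_{k+1}}\in\mathbf{X}\setminus\mathbf{S}\subseteq\mathbf{X}\setminus\overline{\mathbf{S}}_k$.

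The three steps are then: (i) by the definition of a $\pi$-series, $\pi(X_{i_{k+1}},\overline{\mathbf{S}}_k)=\min_{X_i\in\mathbf{X}\setminus\overline{\mathbf{S}}_k}\pi(X_i,\overline{\mathbf{S}}_k)=M_{\pi}(\overline{\mathbf{S}}_k)$; (ii) applying the monotone linkage inequality with $\overline{\mathbf{S}}_k\subseteq\mathbf{S}$ and the outside element $X_{i_{k+1}}$ gives $\pi(X_{i_{k+1}},\overline{\mathbf{S}}_k)\ge\pi(X_{i_{k+1}},\mathbf{S})$; (iii) since $X_{i_{k+1}}\in\mathbf{X}\setminus\mathbf{S}$ is one of the competitors in the minimum defining $M_{\pi}(\mathbf{S})$, we have $\pi(X_{i_{k+1}},\mathbf{S})\ge M_{\pi}(\mathbf{S})$. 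Composing (i)--(iii) yields $M_{\pi}(\overline{\mathbf{S}}_k)\ge M_{\pi}(\mathbf{S})$, which is the first assertion.

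For the ``in particular'' clause, let $\mathbf{S}$ be an inclusion-minimal maximizer of $M_{\pi}$ over $\mathcal{P}^{-\mathbf{X}}$. I would pick any $X_{i_1}\in\mathbf{S}$ and greedily extend it to a $\pi$-series $s_{\pi}$; since $\mathbf{S}\neq\mathbf{X}$, this series must contain an element outside $\mathbf{S}$, so let $X_{i_{k+1}}$ be the first such. Then $\overline{\mathbf{S}}_k$ lies in $\mathcal{P}^{-\mathbf{X}}$ (it is nonempty as it contains $X_{i_1}$, and proper as it omits $X_{i_{k+1}}$), and by the first part $M_{\pi}(\overline{\mathbf{S}}_k)\ge M_{\pi}(\mathbf{S})$. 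Because $\mathbf{S}$ is a maximizer, equality holds, so $\overline{\mathbf{S}}_k$ is itself a maximizer contained in $\mathbf{S}$; inclusion-minimality of $\mathbf{S}$ forces $\overline{\mathbf{S}}_k=\mathbf{S}$. Thus $\mathbf{S}$ appears as a starting set of $s_{\pi}$ and, being a global maximizer of $M_{\pi}$, is a fortiori a maximizer of $M_{\pi}$ over all starting sets of $s_{\pi}$ — that is, $\mathbf{S}$ is a $\pi$-cluster.

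I do not expect a genuine obstacle here; the argument is a three-line inequality chain plus a minimality argument. The only points requiring care — and hence where I would be most careful — are the domain bookkeeping for $\mathcal{P}^{-\mathbf{X}}=2^{\mathbf{X}}\setminus\{\varnothing,\mathbf{X}\}$ (checking that $\overline{\mathbf{S}}_k$ and $\mathbf{S}$ are legitimate members, and that ``first element of $s_{\pi}$ not in $\mathbf{S}$'' is well defined precisely when $\mathbf{S}\subsetneq\mathbf{X}$), the fact that a $\pi$-series may be started at an arbitrary first element (so in particular at any chosen element of $\mathbf{S}$), and verifying that the hypothesis $X_{i_{k+1}}\notin\mathbf{S}$ is exactly the condition licensing the monotone linkage step (ii).
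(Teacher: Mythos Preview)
Your three-step chain --- (i) the $\pi$-series defining relation gives $M_{\pi}(\overline{\mathbf S}_k)=\pi(X_{i_{k+1}},\overline{\mathbf S}_k)$, (ii) monotone linkage with $\overline{\mathbf S}_k\subseteq\mathbf S$ gives $\pi(X_{i_{k+1}},\overline{\mathbf S}_k)\ge\pi(X_{i_{k+1}},\mathbf S)$, (iii) $X_{i_{k+1}}\notin\mathbf S$ gives $\pi(X_{i_{k+1}},\mathbf S)\ge M_{\pi}(\mathbf S)$ --- is exactly the paper's proof. Your handling of the ``in particular'' clause (start a $\pi$-series at any $X_{i_1}\in\mathbf S$, get a maximizer $\overline{\mathbf S}_k\subseteq\mathbf S$, invoke inclusion-minimality) is correct and in fact more explicit than what the paper writes out.
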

\begin{proof}
 $M_{\pi}(\mathbf{\overline{S}_k}) = \pi(\textnormal{X}_{i_{k+1}},\mathbf{\overline{S}_k})$ by definition. Since $\bf{\overline{S}_k} \subseteq \bf{S}$ we have $\pi(X_{i_{k+1}},\bf{\overline{S}_k}) \geq \pi(\textnormal{X}_{i_{k+1}},\bf{S})$ by monotonicity. To end the proof, note that
$\pi(X_{i_{k+1}}, \bf{S}) \geq M_{\pi}(\mathbf{S})$ because $M_{\pi}(\mathbf{S}) = \underset{X_i \in \mathbf{X}\setminus \mathbf{Z}}{\text{min}}
  \pi(X_i,\mathbf{S}) $ and $X_{i_{k+1}} \notin \bf{S}$
\end{proof}
\begin{proposition}\cite{Main1}
If $\bf{S_1},\bf{S_2} \subset \bf{X}$ are overlapping maximizers of a quasi-concave set function $M_\pi(\bf{S})$ over $\mathcal{P}^{-\bf{X}}$, then $\bf{S_1} \cap \bf{S_2}$ is also a maximizer of $M_\pi(\bf{S})$.
\end{proposition}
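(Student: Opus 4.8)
The plan is to derive the statement directly from the defining inequality of quasi-concavity (Definition~\ref{qcvxDef}), using only the observation that two maximizers attain the same (maximal) value and that their intersection is a legitimate element of the domain $\mathcal{P}^{-\mathbf{X}}$.

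First I would fix notation: let $M^\star = \max_{\mathbf{S}\in\mathcal{P}^{-\mathbf{X}}} M_\pi(\mathbf{S})$ denote the optimal value, so that $M_\pi(\mathbf{S_1}) = M_\pi(\mathbf{S_2}) = M^\star$ by hypothesis. Next I would check that $\mathbf{S_1}\cap\mathbf{S_2}\in\mathcal{P}^{-\mathbf{X}}$: it is nonempty precisely because $\mathbf{S_1}$ and $\mathbf{S_2}$ are overlapping, and it is a proper subset of $\mathbf{X}$ because it is contained in $\mathbf{S_1}\subsetneq\mathbf{X}$. Hence $\mathbf{S_1}\cap\mathbf{S_2}$ is simultaneously a valid argument of $M_\pi$ and a feasible candidate in the maximization defining $M^\star$.

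Then, applying quasi-concavity of $M_\pi$ to the pair $\mathbf{S_1},\mathbf{S_2}$ gives $M_\pi(\mathbf{S_1}\cap\mathbf{S_2}) \geq \min\{M_\pi(\mathbf{S_1}),M_\pi(\mathbf{S_2})\} = M^\star$. On the other hand, since $\mathbf{S_1}\cap\mathbf{S_2}\in\mathcal{P}^{-\mathbf{X}}$, we have $M_\pi(\mathbf{S_1}\cap\mathbf{S_2})\leq M^\star$ by the definition of $M^\star$ as the maximum over $\mathcal{P}^{-\mathbf{X}}$. Combining the two inequalities yields $M_\pi(\mathbf{S_1}\cap\mathbf{S_2}) = M^\star$, i.e., $\mathbf{S_1}\cap\mathbf{S_2}$ is itself a maximizer of $M_\pi$ over $\mathcal{P}^{-\mathbf{X}}$.

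There is essentially no hard computational step here; the only point requiring care — and the only place the \emph{overlapping} hypothesis is actually used — is verifying that $\mathbf{S_1}\cap\mathbf{S_2}$ does not fall outside $\mathcal{P}^{-\mathbf{X}}$, since otherwise the quasi-concavity inequality would not be applicable and the conclusion would be ill-posed. If one wished to state the result for a general set system $(\mathbf{X},\mathcal{F})$ instead of $\mathcal{P}^{-\mathbf{X}}$, one would additionally need $\mathcal{F}$ to be closed under intersections of overlapping members; for $\mathcal{F}=\mathcal{P}^{-\mathbf{X}}$ this closure holds automatically, which is why the proof reduces to the short argument above.
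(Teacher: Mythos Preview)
Your argument is correct and is exactly the approach the paper takes: the paper's proof is the one-line remark that the claim follows directly from the quasi-concavity inequality~(\ref{mon_func}). Your write-up simply makes explicit the two points the paper leaves implicit, namely that $\mathbf{S_1}\cap\mathbf{S_2}\in\mathcal{P}^{-\mathbf{X}}$ (which is where the ``overlapping'' hypothesis enters) and that the lower bound from quasi-concavity must be paired with the trivial upper bound $M_\pi(\mathbf{S_1}\cap\mathbf{S_2})\le M^\star$.
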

\begin{proof}
It directly follows from  (\ref{mon_func}).
\end{proof}This implies that the minimal maximizers of a quasi-convex set function are not overlapping. Moreover, any nonminimal maximizer can be uniquely partitioned into a set of the minimal ones.

\begin{theorem}
Each maximizer of a quasi-concave set function on $\mathcal{P}^{-\bf{X}}$ is a union of its
inclusion-minimal maximizers.
\end{theorem}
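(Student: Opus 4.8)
\noindent The plan is to isolate, for each point of a maximizer, the inclusion-smallest maximizer through that point, show this object is in fact an inclusion-minimal maximizer, and then read off the theorem. Fix a maximizer $\mathbf{S}$ and a point $b\in\mathbf{S}$, and let $\mathcal{M}_b$ be the family of all maximizers of $M_\pi$ on $\mathcal{P}^{-\mathbf{X}}$ containing $b$. Any two members of $\mathcal{M}_b$ overlap at $b$, so the Proposition (intersection of overlapping maximizers is again a maximizer) applies repeatedly to the finite family $\mathcal{M}_b$, and $\mathbf{C}_b:=\bigcap\mathcal{M}_b$ is a maximizer; it lies in $\mathcal{P}^{-\mathbf{X}}$ because it contains $b$ and is a subset of $\mathbf{S}\subsetneq\mathbf{X}$, it contains $b$, and it is contained in every maximizer through $b$, so in particular $\mathbf{C}_b\subseteq\mathbf{S}$. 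Thus $\mathbf{C}_b$ is the unique inclusion-minimal element of $\mathcal{M}_b$.

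\noindent Next I would argue that $\mathbf{C}_b$ is itself an inclusion-minimal maximizer (with respect to all of $\mathcal{P}^{-\mathbf{X}}$, not just $\mathcal{M}_b$). Run a $\pi$-series $s_\pi=(X_{i_1},X_{i_2},\dots)$ with $X_{i_1}=b$ and apply Theorem~\ref{Theorem7.1} to $\mathbf{C}_b$: the initial segment of $s_\pi$ ending just before the first element of $s_\pi$ outside $\mathbf{C}_b$ is a maximizer contained in $\mathbf{C}_b$ and containing $b$, so by minimality in the previous paragraph it equals $\mathbf{C}_b$; hence $\mathbf{C}_b$ is a $\pi$-cluster, in fact the \emph{first} initial segment of $s_\pi$ on which $M_\pi$ attains its global value, and every shorter initial segment of $s_\pi$ is strictly suboptimal. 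If some maximizer $\mathbf{N}$ satisfied $\mathbf{N}\subsetneq\mathbf{C}_b$, then necessarily $b\notin\mathbf{N}$, i.e.\ $\mathbf{N}\subseteq\mathbf{C}_b\setminus\{b\}$, and one then wants to play the greedy choices recorded in $s_\pi$ against the monotonicity of $\pi$ to exhibit a linkage value that is forced to be both at least and strictly less than the optimum.

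\noindent Granting those two facts, the theorem follows cleanly: for a maximizer $\mathbf{S}$ we have $\mathbf{S}=\bigcup_{b\in\mathbf{S}}\mathbf{C}_b$ since each $\mathbf{C}_b\subseteq\mathbf{S}$ while $b\in\mathbf{C}_b$; each $\mathbf{C}_b$ is an inclusion-minimal maximizer; and conversely any inclusion-minimal maximizer $\mathbf{M}\subseteq\mathbf{S}$ equals $\mathbf{C}_c$ for each of its elements $c$ (by minimality and the first paragraph). Hence $\bigcup_{b\in\mathbf{S}}\mathbf{C}_b$ is exactly the union of the inclusion-minimal maximizers contained in $\mathbf{S}$, which is the assertion. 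The pairwise disjointness of these minimal maximizers, already noted from the Proposition, upgrades the conclusion to the ``unique partition'' remark made just before the theorem.

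\noindent The main obstacle is the middle step: ruling out a proper sub-maximizer $\mathbf{N}\subsetneq\mathbf{C}_b$ that avoids $b$. Quasi-concavity on its own only constrains maximizers that share a common point (that is all the first paragraph uses), so this step must genuinely exploit Theorem~\ref{Theorem7.1} and the explicit form $M_\pi(\mathbf{T})=\min_{X_i\in\mathbf{X}\setminus\mathbf{T}}\pi(X_i,\mathbf{T})$ together with the monotonicity of the linkage, and the bookkeeping around ties in $s_\pi$ is where it is most delicate. An induction on $|\mathbf{S}|$ in place of the $\mathbf{C}_b$ construction meets the same difficulty, since there too one must produce, through every point of a non-minimal maximizer, a strictly smaller maximizer. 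Everything else is routine manipulation of finite set families.
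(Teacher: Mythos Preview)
You have correctly isolated the crux: for each $b$ in a maximizer $\mathbf{S}$ one must produce an inclusion-minimal maximizer through $b$ sitting inside $\mathbf{S}$. Your $\mathbf{C}_b$ construction via iterated intersection cleanly yields the unique smallest maximizer \emph{containing} $b$, and Theorem~\ref{Theorem7.1} then identifies $\mathbf{C}_b$ with an initial segment of a $\pi$-series from $b$. But the step you flag as ``the main obstacle'' --- ruling out a maximizer $\mathbf{N}\subsetneq\mathbf{C}_b$ with $b\notin\mathbf{N}$ --- is not merely delicate bookkeeping; it cannot be completed at the stated level of generality. On $\mathbf{X}=\{1,2,3\}$ take the linkage with $\pi(1,\{2\})=\pi(3,\{1,2\})=1$, $\pi(3,\{1\})=\pi(3,\{2\})=2$, $\pi(1,\{3\})=\pi(2,\{1\})=\pi(2,\{3\})=\tfrac12$, and $\pi(1,\{2,3\})=\pi(2,\{1,3\})=0$. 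One checks directly that $\pi$ is (even strictly) monotone, that $M_\pi(\{2\})=M_\pi(\{1,2\})=1$ while $M_\pi<1$ elsewhere on $\mathcal{P}^{-\mathbf{X}}$, and that the sole inclusion-minimal maximizer is $\{2\}$. Thus $\{1,2\}$ is a maximizer which is \emph{not} a union of minimal maximizers; here $\mathbf{C}_1=\{1,2\}$ and $\mathbf{N}=\{2\}$ realises exactly the configuration you were unable to exclude.

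For comparison, the paper's proof is a single sentence asserting that Theorem~\ref{Theorem7.1} furnishes, for each $X_i\in\mathbf{S}^\ast$, a minimal maximizer contained in $\mathbf{S}^\ast$ through $X_i$. But Theorem~\ref{Theorem7.1} only shows that some initial segment $\overline{\mathbf{S}}_k\subseteq\mathbf{S}^\ast$ of a $\pi$-series from $X_i$ is again a maximizer, and (in its ``in particular'' clause) that a set \emph{already known} to be minimal must equal such a segment; it does not guarantee that the segment so produced is itself inclusion-minimal among all maximizers. In the example above the $\pi$-series from $1$ is $(1,2,3)$, the shortest maximizing initial segment is $\{1,2\}$, and it is not minimal. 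So your proposal and the paper's one-line argument share the same missing step; the difference is that you located it explicitly rather than passing over it. A correct version of the statement needs a hypothesis beyond quasi-concavity of $M_\pi$ (or, equivalently, beyond monotonicity of $\pi$).
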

\begin{proof}
 Indeed, if $\bf{S^\ast}$ is a maximizer of $M_\pi(\bf{S})$ over $\mathcal{P}^{-\bf{X}}$, then, according to Theorem \ref{Theorem7.1}, for
any $X_{i} \in \bf{S^\ast}$, there exists a minimal maximizer included in $\bf{S^\ast}$ and containing $X_{i}$.
\end{proof}

\subsection{\textbf{Our greedy algorithm for diverse variable selection with distance covariance for solving Problem I}:}

  \begin{algorithm}[H]
   \caption{DiverseMinimalMaximDCoV: Diverse Combinatorial Distance Covariance}
    \begin{algorithmic}[1]
      \Function{=DiverseMinimalMaximDCoV}{$\mathbf{X}$}
\ForAll{$X_i \in \bf{X}$} \\\enskip\enskip\enskip Greedily form $\pi$-series $s_\pi(x) = (X_i,X_{i_2}\ldots X_{i_N})$ starting from $X_i$ as its first \item[]\enskip\enskip\enskip element. 
      \enskip\enskip\enskip\hspace{5em} \For {each $\pi$-series $s_\pi(x)$ in step 3}\\ \enskip\enskip\enskip\enskip\enskip\enskip Find a corresponding smallest starting subset $\bf{T_x}$ with  $$M_\pi(\bf{T_x}) = \underset{ 1 \leq k \leq N-1}{\mathrm{max}} \pi(X_{i_{k+1}},\{X_{i_{1}},\ldots,X_{i_{k}}\})$$\EndFor \EndFor
    \State Among the non-coinciding minimal $\pi$-clusters $T_x$'s choose those that maximize $$ M_\pi(\bf{T_x})=\underset{X_i \in \mathbf{X}\setminus \mathbf{T_x}}{\text{min}}
  \pi(X_i,\mathbf{T_x}) $$
    \enskip\enskip\enskip all of which are the required minimal maximizers, and we return them as minimalMax\\

    \Return(minimalMax)
       \EndFunction

\end{algorithmic}
\end{algorithm}

The above algorithm finds all minimal maximizers in $\mathcal{O}(N^3g)$ time where $g$ is the average time required to compute the value of $\pi(X_i,\bf{S})$ for any $X_i,\bf{S}$. The fastest version of computing distance covariance to date is $\mathcal{O}(Nlog{}N)$ and proposed in \cite{FastDCoV}.
\begin{theorem}
 The algorithm above finds all the minimal maximizers over $\mathcal{P}^{-\bf{X}}$.
\end{theorem}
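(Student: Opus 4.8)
The plan is to establish the two inclusions separately: (soundness) every set the algorithm returns is an inclusion-minimal maximizer of $M_\pi$ over $\mathcal{P}^{-\mathbf{X}}$, and (completeness) every inclusion-minimal maximizer of $M_\pi$ over $\mathcal{P}^{-\mathbf{X}}$ is returned. The one elementary fact I would record at the outset is that along any $\pi$-series $s_\pi=(X_{i_1},\ldots,X_{i_N})$ the value the algorithm evaluates on the $k$-th prefix $\overline{\mathbf{S}}_k=\{X_{i_1},\ldots,X_{i_k}\}$ is exactly $M_\pi(\overline{\mathbf{S}}_k)$, because the $\pi$-series property gives $\pi(X_{i_{k+1}},\overline{\mathbf{S}}_k)=\min_{X_i\in\mathbf{X}\setminus\overline{\mathbf{S}}_k}\pi(X_i,\overline{\mathbf{S}}_k)=M_\pi(\overline{\mathbf{S}}_k)$. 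Hence, for each seed $x$, the set $\mathbf{T}_x$ the algorithm builds is precisely the shortest prefix of $s_\pi(x)$ on which $M_\pi$ attains its maximum over all prefixes, and in particular $M_\pi(\mathbf{T}_x)\le\max_{\mathbf{S}\in\mathcal{P}^{-\mathbf{X}}}M_\pi(\mathbf{S})$ for every $x$; also $\mathbf{T}_x$ is a $\pi$-cluster by definition.

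For completeness, let $\mathbf{S}$ be an inclusion-minimal maximizer and fix any $x\in\mathbf{S}$. I would apply Theorem~\ref{Theorem7.1} to the $\pi$-series $s_\pi(x)$ produced by the algorithm: its first element is $x\in\mathbf{S}$, which is all Theorem~\ref{Theorem7.1} requires, so the (possibly non-unique) greedy tie-breaking is irrelevant. The theorem yields $\mathbf{S}=\overline{\mathbf{S}}_k$, where $X_{i_{k+1}}$ is the first element of $s_\pi(x)$ not in $\mathbf{S}$ (note $1\le k\le N-1$ since $\mathbf{S}\in\mathcal{P}^{-\mathbf{X}}$). It remains to see that the algorithm picks this prefix as $\mathbf{T}_x$: $M_\pi(\overline{\mathbf{S}}_k)=M_\pi(\mathbf{S})$ is the global maximum, hence certainly attains the prefix-maximum; and if some shorter prefix $\overline{\mathbf{S}}_{k'}$ with $k'<k$ also attained it, then $\overline{\mathbf{S}}_{k'}\in\mathcal{P}^{-\mathbf{X}}$ would be a maximizer strictly contained in $\mathbf{S}$, contradicting inclusion-minimality. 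So $\mathbf{T}_x=\mathbf{S}$, and because $M_\pi(\mathbf{T}_x)$ equals the global maximum, $\mathbf{S}$ is kept in the final selection step.

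For soundness, first note $\mathcal{P}^{-\mathbf{X}}$ is finite and nonempty (assuming $|\mathbf{X}|\ge2$), so $M_\pi$ attains a maximum; by the theorem that every maximizer is a union of its inclusion-minimal maximizers, at least one inclusion-minimal maximizer exists, and by completeness it equals some $\mathbf{T}_x$, so $\max_x M_\pi(\mathbf{T}_x)$ is the global maximum and every returned $\mathbf{T}$ is a global maximizer. Suppose a returned $\mathbf{T}=\mathbf{T}_x=\overline{\mathbf{S}}_k$ (a prefix of $s_\pi(x)$) is not inclusion-minimal. Writing $\mathbf{T}$ as the union of the inclusion-minimal maximizers it contains and picking the one containing the seed $x$, we get an inclusion-minimal maximizer $\mathbf{S}'$ with $x\in\mathbf{S}'\subsetneq\mathbf{T}$ (it is proper, else $\mathbf{T}$ itself would be inclusion-minimal). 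Applying Theorem~\ref{Theorem7.1} to $s_\pi(x)$ and $\mathbf{S}'$ gives $\mathbf{S}'=\overline{\mathbf{S}}_{k'}$; since prefixes of a $\pi$-series are nested and $\overline{\mathbf{S}}_{k'}\subsetneq\overline{\mathbf{S}}_k$, we have $k'<k$, while $M_\pi(\overline{\mathbf{S}}_{k'})=M_\pi(\mathbf{S}')$ is again the global maximum. Thus $\overline{\mathbf{S}}_{k'}$ is a strictly shorter prefix attaining the prefix-maximum, contradicting the choice of $\mathbf{T}_x$ as the shortest such prefix. Hence every returned set is inclusion-minimal, finishing the proof.

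The main obstacle — and the step I would be most careful about — is the soundness direction's clash between the algorithm's ``shortest prefix'' rule and inclusion-minimality: one must route the hypothetical smaller maximizer through the \emph{same} $\pi$-series $s_\pi(x)$, which works only because Theorem~\ref{Theorem7.1} applies to \emph{every} $\pi$-series whose first element lies in the set, and because the inclusion-minimal maximizer through the seed $x$ is genuinely a proper subset of $\mathbf{T}$. A minor but worthwhile point to spell out is that non-uniqueness of greedy $\pi$-series and the deduplication of coinciding $\mathbf{T}_x$'s are harmless, since both directions use only one series per seed and never invoke uniqueness.
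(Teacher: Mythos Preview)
Your argument is correct and follows the same skeleton as the paper's proof: both directions are handled via Theorem~\ref{Theorem7.1} applied to the algorithm's $\pi$-series through an element of the set in question, combined with the ``smallest prefix'' rule in step~5.

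Where you go further than the paper is in the soundness direction. The paper's proof only verifies that every returned $\mathbf{T}_x$ is a \emph{maximizer} (one line, via Theorem~\ref{Theorem7.1}) and then proves completeness by contradiction; it never explicitly checks that returned sets are \emph{inclusion-minimal} maximizers. You close this gap by invoking the ``every maximizer is a union of its inclusion-minimal maximizers'' theorem to produce an inclusion-minimal maximizer $\mathbf{S}'\subsetneq\mathbf{T}_x$ through the seed $x$, and then using Theorem~\ref{Theorem7.1} again to realize $\mathbf{S}'$ as a strictly shorter prefix of the same $\pi$-series $s_\pi(x)$, contradicting the minimality of $\mathbf{T}_x$. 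This extra step is exactly what is needed, and your care about routing $\mathbf{S}'$ through the \emph{same} series (via the seed $x$) is the right observation. Your completeness argument also makes explicit a point the paper leaves implicit: that the prefix $\overline{\mathbf{S}}_k=\mathbf{S}$ not only attains the prefix-maximum but is the \emph{shortest} prefix to do so, which is why the algorithm actually outputs it as $\mathbf{T}_x$.
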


\begin{proof}
 From Theorem \ref{Theorem7.1} it follows that each element of minimalMax is a maximizer of $M_\pi(\bf{S})$ over $\mathcal{P}^{-\bf{X}}$.
 Assume that there is a minimal maximizer $\bf{S}$ that does not belong to minimalMax, and let $X_{i} \in \bf{S}$. Then, according to Theorem \ref{Theorem7.1}, there exist $\pi$-series starting from $X_i$ and minimal $\pi$-cluster $T_x \subseteq \bf{S}$ containing $X_{i}$ with $M_\pi(\bf{T_x}) \geq M_\pi(\bf{S})$.
 Since $\bf{S}$ does not belong to minimalMax, and, according to steps $5$ and $8$ of the algorithm, $T_x$ or some subset of $T_x$ belongs
 to minimalMax, there are a minimal maximizer strictly
 included in $\bf{S}$  which contradicts the minimality of $\bf{S}$.
\end{proof}

Putting all these results together we present our algorithm in Algorithm 1 above.

\section{All-Relevant Feature Selection}
In addition to our algorithm proposed above, we would like to point to a recent algorithm proposed in \cite{WahbaFS} for the purpose of solving Problem II using distance covariance. We present this algorithm in Algorithm 2 below.


 \begin{algorithm}[H]
   \caption{Kong-Wang-Wahba's All-Relevant Feature Selection algorithm\\ for Problem II:}
    \begin{algorithmic}[1]
      \Function{Kong-Wang-Wahba's Algorithm}{$\mathbf{X}$}
\State Calculate marginal sample distance correlations $\rho_n(X_i,Y)$ for variables $X_i$\item[]
\enskip\enskip\enskip\enskip for $i = 1, \ldots, n$ with the response $Y$.
\State Rank the variables in decreasing order of the sample distance correlations. Denote the ordered variables as $x_1, x_1, \ldots, x_n$. Start with $\bf{X_s} = \{x_1\}$.

\ForAll{$i$ from $2$ to $n$}\item[]\enskip\enskip\enskip\enskip Keep adding $x_i$ to $\bf{X_s}$ if $\nu_n(\bf{X_s}  Y)$, the
sample distance covariance, does not decrease.\item[]\enskip\enskip\enskip\enskip Stop otherwise.
\EndFor\item[]
    \Return($\bf{X_s}$)
       \EndFunction

\end{algorithmic}
\end{algorithm}

\section{Diverse and Relevant Feature Selection}

A methodological way of obtaining a solution for Problem III is by first running Kong-Wang-Wahba's All-relevant feature selection algorithm followed by running our proposed GreedyDiverseDCoV algorithm on the resulting solution of Kong-Wang-Wahba's algorithm. This would give a subset of the maximally separated diverse features that are also relevant with respect to the response. An alternate methodology would be to do the vice-versa of running our proposed GreedyDiverseDCoV algorithm first followed by running Kong-Wang-Wahba's algorithm on the resulting solution subset which is a union of the maximally separated subsets provided by our algorithm. This methodology of one before the other is analogous in principle
to the forward selection or backward selection methods for variable(feature) selection. That said this methodology of running both the GreedyDiverseDCoV and Kong-Wang-Wahba's algorithms  in series come with varied and useful theoretical guarantees as discussed in this paper and also deal with multiple objectives of diverse and relevant feature selection while also being completely model-free, free of distributional assumptions and being non-parametric.

\section{Experiments}
In this section we evaluate our above proposed combination of DiverseMinimalMaximDCoV Algorithm in Algorithm 1 for diverse selection applied on the subset returned by the relevant selection algorithm of Kong-Wang-Wahba in Algorithm 2. We compare this combination of diversity and relevancy encouraging feature selection with the mRMR Ensemble algorithm in \cite{mrmr} which also aims to select relevant and non-redundant (diverse) features.

\begin{figure*}[!htp]
\centering
\includegraphics[width=15cm]{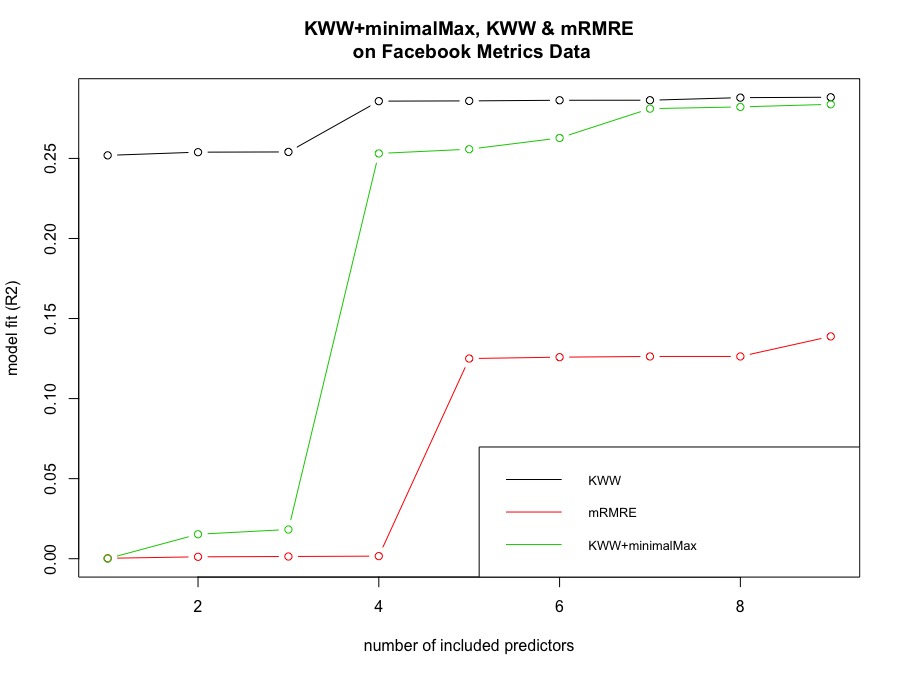}
\caption{Results on UCI's Facebook's comment volume prediction dataset, \url{https://archive.ics.uci.edu/ml/datasets/Facebook+Comment+Volume+Dataset}}
\label{fig:lion4}
\end{figure*}

\begin{figure*}[!htp]
\centering
\includegraphics[width=15cm]{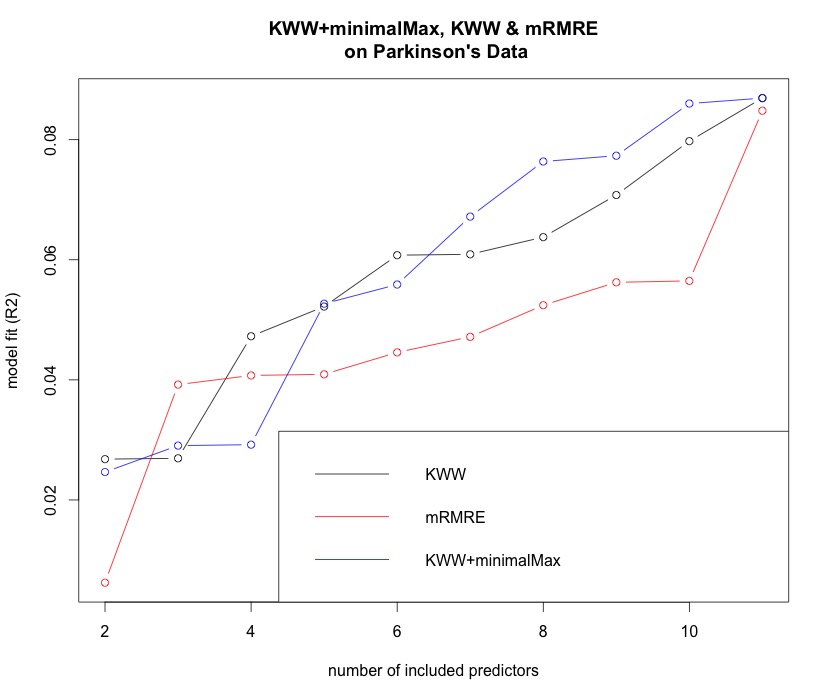}
\caption{Results on UCI's Parkinson Speech Dataset with Multiple Types of Sound Recordings,  \url{https://archive.ics.uci.edu/ml/datasets/Parkinson+Speech+Dataset+with++Multiple+Types+of+Sound+Recordings}}
\label{fig:lion1}
\end{figure*}

\begin{figure*}[!htp]
\centering
\includegraphics[width=15cm,height=15cm]{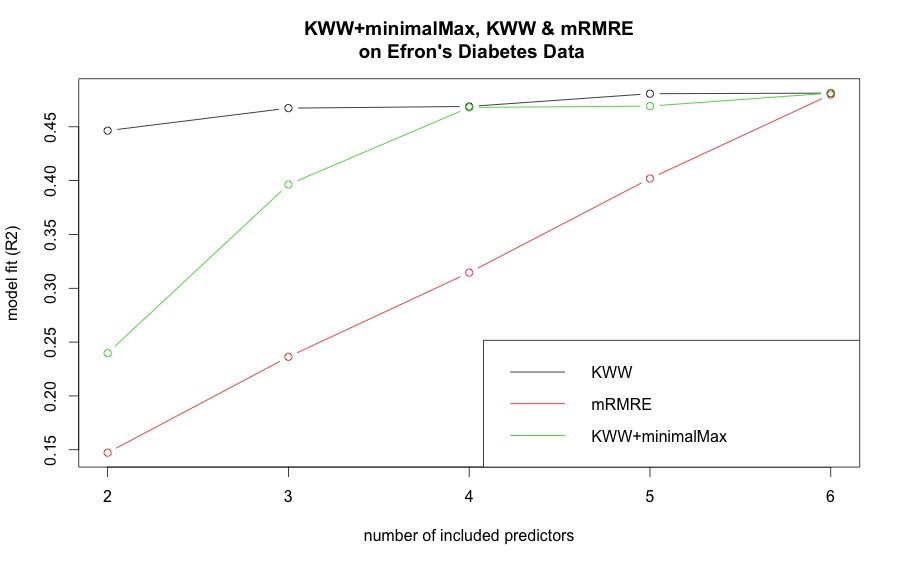}
\caption{Results on Diabetes data of 442 patients from Efron et al. 2004. Least angle regression, Annals of Statistics, 32:407-499, \url{http://artax.karlin.mff.cuni.cz/r-help/library/care/html/efron2004.html}}
\label{fig:lion2}
\end{figure*}

\begin{figure*}[!htp]
\centering
\includegraphics[width=15cm]{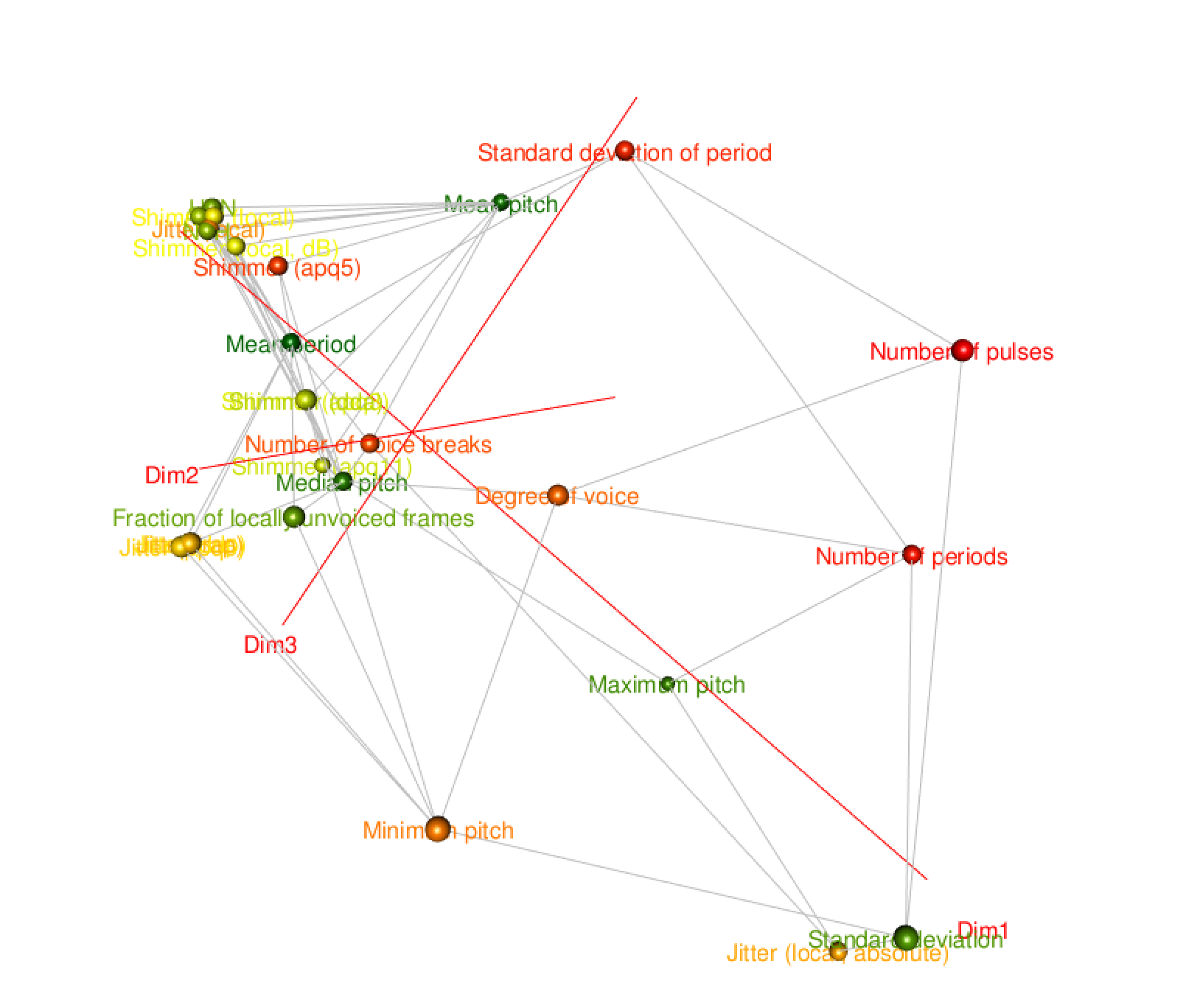}
\caption{ISOMAP Embedding:}
\label{fig:lion3}
\end{figure*}

\subsection{Datasets used in experiments:}
These are the three real-life datasets on which we evaluated the combination of Algorithm 1 on the results of Algorithm 2 and compared it with the mRMR Ensemble algorithm:
\begin{enumerate}
\item \textbf{UCI's Facebook's comment volume prediction dataset:} \\The goal associated with this dataset is to be able to predict the volume of comments on Facebook using various input metrics.
\item \textbf{UCI's Parkinson Speech Dataset with Multiple Types of Sound Recordings:} \\ We aimed to use speech data from Parkinson's patients with varying levels of severity and non-patients inorder to predict the UPDRS score, a score that is widely used in the medical fraternity to gauge the severity of Parkinson's in the subject under investigation.
\item \textbf{Efron's Diabetes data:}\\ This dataset consists of ten baseline variables of age, sex, body mass index, average blood pressure, and six blood serum measurements that were obtained for each of 442 diabetes patients, along with a response of interest, a quantitative measure of disease progression one year after baseline that was also collected. The goal associated is to be able to build a model that predicts this measure of disease progression.

\end{enumerate}

We present the results of this algorithmic comparison evaluated by the R-Squared Error metric upon fitting linear regression models on these three datasets in Figures 1, 2 and 3. As seen, our approach clearly outperformed the mRMR Ensemble model. In the case of our proposed Algorithm 1 we took an iterative approach of obtaining the minimal maximizer subsets and then noting them and removing them to regenerate minimal maximizers from remaining set of features. We continued this process till the end or till enough number of features were generated. We do note that the quality of the first iterate of minimal maximizers with regards to optimization of our proposed objective will be higher than subsequent subsets of minimal maximizers, but this is amongst the best one could do in our setting in order to generate an entire ordering of features from being most diverse with regards to our objective to the least diverse.  \par

In addition to this, it is interesting to analyze the gap between KWW+minimalMax line and KWW lines on the plot as it tells us how much the optimally relevant covariate selection of KWW matches with optimally diverse feature selection of our algorithm for any given dataset. So it tells us about the trade-off between relevancy and diversity of the covariates in any dataset like in a pareto frontier. Sometimes the relevancy maximizing and diversity maximizing subsets can intersect more and sometimes less based on the quality of the dataset in balancing these two criteria. Therefore this gap if quantified (say for example by integrating the difference between these lines) could be a good measure of evaluating the quality of any given dataset with respect to the relevancy-diversity tradeoff curve. This is just a direction we are pointing at and is not the main focus of our current paper. \par
In addition to this evaluation, we also performed a qualitative (and approximate) experiment to visually validate the diversity encouraging property of our theory. We did this by applying ISOMAP, a popular manifold learning technique on a matrix of pair-wise distance correlations between all pairs of features. This basically tries to generate a 2 dimensional Euclidean embedding like representation of of the Parkinson's dataset. This was presented in Figure 4 where we clearly were able to find the minimal maximizers produced by our algorithm to be farther from the rest of the features (as colored in red). We actually color coded the points from red to green in the order generated by our proposed algorithm 1. Therefore we would expect the red features to be more diverse than the green. Although this figure is an approximation of the behavior of features with regards to diversity, it still somewhat matches visually with exact solution of our formulation. 

In addition to comparisons with linear regression models on features selected by our approach and mRMR Ensemble, we also computed the 5 fold Cross- validated Mean Squared Error (MSE error) in predicting UPDRS scores with the Parkinson's dataset upon applying the random forest method of regression. Our combine approach of Algorithm 2 +  Algorithm 1 produces a lower MSE of 148.84 vs mRMRe which obtained 154.39 MSE. 

\section{An efficient pre-processing routine: The effect of scaling and centering on combinatorics of $\nu_n(\cdot)$ and $\rho_n(\cdot)$:}

We finally present an enumerative computational experiment we did to show that centering and scaling the data prior to applying our algorithms would lead to much better results as the distance covariances match up much better with distance correlations upon centering and scaling the data. This leads to the optimization of our proposed functions of distance covariance to auxiliarily mimic the optimization of our objective with distance correlation in the place of distance covariance. That is desirable as distance correlation is a normalized version of distance covariance. \label{enumExpt} \par
As part of these empirical enumerative experiments, we collected various popular real-life regression and classification datasets from the well known University of California-Irvine Machine Learning Repository (UCI-ML) and enumerated the entire power set of possible combinations of their features (covariates) $2^\mathbf{X}$. We then computed the distance correlations between each subset belonging to the power set and the response (or class-label) variable $\mathbf{Y}$. We denote these distance correlations by $\rho_{\bf{E}}$. We also computed the distance covariances between each subset belonging to the power set and the response (or class-label) variable $\mathbf{Y}$. We denote these distance covariances by $\nu_{\bf{E}}$ in the same arbitrary order of subsets used when computing $\rho_{\bf{E}}$. Now with this set of paired measurements of $\rho_{\bf{E}}, \nu_{\bf{E}}$ available across the entire power set of combinations of features we computed the distance correlation of  $\rho_{\bf{E}}, \nu_{\bf{E}}$ which we denote by  $\rho(\rho_{\bf{E}}, \nu_{\bf{E}})$ to see if combinatorially optimizing distance covariance over the power set is a good proxy (surrogate) for combinatorially optimizing distance correlation. The distance correlation $\rho(\rho_{\bf{E}}, \nu_{\bf{E}})$ happened to be very high in almost all cases and very close to the theoretical upper-bound of 1 which indicates a strong statistical dependence between $\rho_{\bf{E}}$ and $\nu_{\bf{E}}$, thereby directly pointing out to the fact that combinatorially optimizing $\nu_{\bf{S}}, S\subseteq 2^{\bf{X}}$ is a great proxy for combinatorially optimizing $\rho_{\bf{S}}$ over the power-set. We would also like to mention that the values were close to one in the case when the covariates(features or variables) were centered and scaled; an operation that is a widely accepted pre-processing for regression or classification modeling. We were motivated to contrast the highly-encouraging results produced after centering and scaling with respect to not performing a centering and scaling because of the fact that the sample distance correlation is a function of sample distance covariances and for a fixed response variable $\mathbf{Y}$, the numerator of sample distance correlation in equation 2 is dependent on both $\bf{X}$ and $Y$, while the denominator is only a function of $\bf{X}$ for a fixed response $\mathbf{Y}$. Thereby, the contribution of $\mid\mid \bf{X} \mid\mid$ on $\rho_n(\bf{X},Y)$ when $\rho_n(\bf{X},Y)$ can be reduced by scaling and centering the data prior to computing the distance covariance. This can be further motivated by the following identity that was proved in \cite{PVPaper}
\begin{align*}
    \nu_n(\bf{X,Y})=\Trace{\mathbf{X}^T\mathbf{L}_\mathbf{Y}\mathbf{X}}
    = &\frac{1}{2}\sum_{i,j=1}^{n}[\widehat{\mathbf{E}}_\mathbf{Y}]_{i,j}[\mathbf{E}_\mathbf{X}]_{i,j}.
\end{align*}
where $\widehat{\mathbf{E}}_\mathbf{Y}$ is the double-centered Euclidean distance matrix formed with the rows of $\mathbf{Y}$ being the points for computing the pair-wise distances on and $\mathbf{E}_\mathbf{Y}$ is the standard (without double-centering) Euclidean distance matrix of the rows of $\bf{X}$. This gives us that when $\bf{X}=Y$, the denominator of distance correlation is solely a function of $\bf{X}$ that can be standardized across $\bf{S}\in 2^{\bf{X}}$ by scaling and centering the values in $\bf{S}$. \par
All these results and comparisons of our enumerative experiment on the UCI-ML datasets are presented in Table 1 below.

\begin{table}[H]
\centering

\label{my-label}
\begin{tabular}{|l|l|l|l|l|}
\hline
\textbf{Dataset}                                 & \textbf{Dimensionality} & $\bf{\vert 2^{X} \vert - 1}$ & \begin{tabular}[c]{@{}l@{}}$\bf{\rho(\rho_{E}, \nu_{E})}$ \textbf{without}\\ \textbf{centering} \&\textbf{ scaling}\end{tabular} & \begin{tabular}[c]{@{}l@{}}$\bf{\rho(\rho_{E}, \nu_{E})}$ \textbf{with}\\ \textbf{centering} \&\textbf{ scaling}\end{tabular} \\ \hline
Airfoil Self-Noise                                                              & 1503 by 5      & 31                       & 0.896                                                                                           & 0.999                                                                                           \\ \hline
Abalone                                                                         & 4177 by 8      & 255                      & 0.422                                                                                           & 0.693                                                                                           \\ \hline
\begin{tabular}[c]{@{}l@{}}Banknote \\Authentication\end{tabular}                                                          & 1372 by 4      & 15                       & 0.938                                                                                           & 0.993                                                                                           \\ \hline
\begin{tabular}[c]{@{}l@{}}Concrete Compressive\\ Strength\end{tabular}         & 1030 by 8      & 255                      & 0.961                                                                                           & 0.965                                                                                           \\ \hline
\begin{tabular}[c]{@{}l@{}}Protein Localization \\ Sites of E.coli\end{tabular} & 336 by 7       & 127                      & 0.891                                                                                           & 0.966                                                                                           \\ \hline
Forest Fires                                                                    & 517 by 12      & 4095                     & 0.841                                                                                           & 0.941                                                                                           \\ \hline
Yacht Hydrodynamics                                                             & 308 by 6       & 63                       & 0.896                                                                                           & 0.999                                                                                           \\ \hline
\end{tabular}
\caption{A enumerative experiment with distance correlation and distance covariances over the power set }
\end{table}

\section{Conclusion:}
 We showed that our proposed Algorithm 1 gives exact solutions that are minimal-maximizers of our diversity encouraging objective. Similarly Algorithm 2 gives optimal solutions for a relevancy encouraging objective function. Now the quality of a solution subset that has a mixture of both properties of relevancy and diversity is dependent on pareto like trade-offs used in choosing the extent of diversity or relevancy one is willing to part away with unlike in highly optimal situations where the optimal solution of Algorithm 1 coincides with the optimal solution of Algorithm 2. That particular case would imply that the quality of the dataset being used for regression or classification is pretty optimal with regards to the relevancy-diversity tradeoff.

\section*{Bibliography:}


\begin{thebibliography}{}




\bibitem{mrmr} {\sc H. Peng, F. Long, and C. Ding}, {\em Feature Selection Based on Mutual Information:
Criteria of Max-Dependency, Max-Relevance,
and Min-Redundancy}, IEEE Transactions on Pattern Analysis and Machine Intelligence, pp. 1226-38, (2005).

\bibitem{divfs} {\sc A. Das, A. Dasgupta and R.Kumar},
{\em Selecting Diverse Features via Spectral
Regularization}, Proceedings of Neural Information Processing Systems (NIPS), (2012).



\bibitem{qcf0} {\sc G. J. Szekely, M. L. Rizzo and N. K. Bakirov},
{\em Measuring and Testing Dependence
by Correlation of Distances}, The Annals of Statistics, 35(6),~pp.2769-2794, (2007).


%
%
%
%
%
%
%
%

\bibitem{Main1}
{\sc  Y. Kempner, B. Mirkin and I.Muchnik}, {\it
Monotone linkage clustering and quasi-concave set functions}, Applied Mathematics Letters, 10, ~pp.19-24, (1997).

%
%
%
%
%

\bibitem{DCIneq}
{\sc  Michael R. Kosorok}, {\it Correction: Discussion of Brownian distance covariance}, (2010), Annals of Applied Statistics, Volume 7, Number 2, pg. 1247, (2013).
%

\bibitem{FastDCoV}
{\sc Xiaoming Huo and Gabor J. Szekely}, {\it Fast Computing for Distance Covariance}, Technometrics, Volume 58, Issue 4, pg. 435-447, (2016).

\bibitem{WahbaFS}
{\sc Jing Kong, Sijian Wang and Grace Wahba}, {\it Using distance covariance for improved variable selection with application to learning genetic risk models.}, Statistics in Medicine, Volume 34(10), (2015), pg. 1708-20.

\bibitem{PVPaper}
{\sc Praneeth Vepakomma, Chetan Tonde, Ahmed Elgammal}, {\it Supervised Dimensionality Reduction via Distance Correlation Maximization.}, ArXiv, (2016).

\bibitem{Mullat}
{\sc J. Mullat}, {\it Extremal subsystems of monotone systems: I, II}, Automation and Remote Control Volume 37,(1976),pg. 758-766;
1286--1294.

\end{thebibliography}
\end{document}